\newtheorem{thm}{Theorem}
\newtheorem{prop}{Proposition}
\newtheorem{lem}{Lemma}
\title{\Large Cryptanalysis of Semidirect Product Key Exchange Using Matrices Over Non-Commutative Rings}
\author{\normalsize Christopher Battarbee, Delaram Kahrobaei, and Siamak F.\ Shahandashti}
\affil{Department of Computer Science, University of York, UK \\
\texttt{\{cb203, dk928, siamak.shahandashti\} @york.ac.uk}}
\begin{document}

\maketitle

\begin{abstract}
    It was recently demonstrated that the Matrix Action Key Exchange (MAKE) algorithm, a new type of key exchange protocol using the semidirect product of matrix groups, is vulnerable to a linear algebraic attack if the matrices are over a commutative ring. In this note, we establish conditions under which protocols using matrices over a non-commutative ring are also vulnerable to this attack. We then demonstrate that group rings $R[G]$, where $R$ is a commutative ring and $G$ is a non-abelian group, are examples of non-commutative rings that satisfy these conditions.
\end{abstract}

\section{Introduction}
Since the advent of Shor's algorithm, it has been desirable to study alternatives to the Diffie-Hellman key exchange~\cite{diffie1976new}. One approach to this problem appeals to a more complex group structure: recall that for (semi)groups $G,H$ and a homomorphism $\theta:H\to Aut(G)$, the semidirect product of $G$ by $H$ with respect to $\theta$, $G\rtimes_{\theta}H$, is the set of ordered pairs $G\times H$ equipped with multiplication 
\[(g,h)(g',h')=(\theta(h')(g)g',hh')\]
Recall also that the action of a group $G$ on a finite set $X$ is a function $(G,X)\to X$, here written as $g\cdot x$, satisfying $1\cdot x=x$ and $g\cdot(h\cdot x)=(gh)\cdot x$ for all $g,h\in G$. It turns out that such an action induces a homomorphism into the group of permutations of $X$; in particular, if $G,H$ are groups, an action of $H$ on $G$ specifies a homomorphism into the automorphism group of $G$, so specifying such an action suffices to specify a semidirect product structure.

The semidirect product can be used to generalise the Diffie-Hellman key exchange \cite{habeeb2013public} via a general protocol sometimes known as the ``non-commutative shift''. Originally, the semigroup of $3\times 3$ matrices over the group ring $\mathbb{Z}_7[A_5]$ is proposed as the platform; however, this turned out to be vulnerable to the type of attack (the so-called ``dimension attack'') by linear algebra described in \cite{myasnikov2015linear},\cite{roman2015linear}. Other platforms used include tropical algebras \cite{grigoriev2014tropical} and free nilpotent $p$-groups \cite{kahrobaei2016using}. The former is shown to be insecure in \cite{isaac2021closer}, \cite{kotov2018analysis}.

The insight of the recent MAKE protocol \cite{rahman2020make} is to use the ring formed by square matrices over a ring. This object is a group under addition and a semigroup under multiplication, so we can follow the syntax of \cite{habeeb2013public} in such a way as to mix operations so that no power of any matrix is ever exposed. However, the protocol is vulnerable to another linear algebraic attack \cite{brown2021cryptanalysis}, which relies on the commutativity of the underlying ring. The purpose of this note is to demonstrate that under certain circumstances, using a non-commutative underlying ring will have the same vulnerability. In particular, we present general conditions by which one can decide if a platform to be used with MAKE is unsafe. It turns out these conditions are satisfied by group rings of the form used in \cite{habeeb2013public}; note that we do not claim to present a break of \cite{habeeb2013public} via our methods.

\section{Matrix Action Key Exchange (MAKE)}
The following is taken from \cite{rahman2020make}, following an original version in which $H_1=H_2$.

For $n\in\mathbb{N}$ and $p$ prime, consider the additive group $G$ of $n\times n$ matrices over $\mathbb{Z}_p$, $M_n(\mathbb{Z}_p)$, and the semigroup $S=\{(H_1^i,H_2^i):i\in\mathbb{N}\}$ generated by non-invertible matrices $H_1,H_2\in M_n(\mathbb{Z}_p)$. The action of $S$ on $G$ defined by $(H_1^i,H_2^i)\cdot M=H_1^iMH_2^i$\footnote{We rely on commutativity of $S$ to satisfy the axioms of an action, which is why a cyclic (semi)group is used.} induces a homomorphism into the automorphism group of $G$; we can therefore define the semidirect product of $G$ by $S$ with multiplication
\[(M,(H_1^i,H_2^i))(M',(H_1^j,H_2^j))=(H_1^jMH_2^j+M',(H_1^{i+j},H_2^{i+j}))\]
In particular one checks that for any choice of $H_1,H_2$, exponentiation has the form 
\[(M,(H_1,H_2))^n=\left(\sum_{i=0}^{n-1}H_1^iMH_2^i,(H_1^n,H_2^n)\right)\]
We use this semidirect product structure in the syntax of  \cite{habeeb2013public} as follows. Suppose Alice and Bob wish to agree on a shared, private key by communicating over an insecure channel. Suppose also that public data $M,H_1,H_2$ is available.

\begin{enumerate}
    \item Alice picks random $x\in\mathbb{N}$ and calculates $(M,(H_1,H_2))^x=(A,(H_1^x,H_2^x))$ and sends $A$ to Bob.
    \item Bob similarly calculates a value $B$ corresponding to random $y\in\mathbb{N}$, and sends it to Alice.
    \item Alice calculates $(B,*)(A,(H_1^x,H_2^x))=(H_1^xBH_2^x+A,**)$ and arrives at her key $K_A=H_1^xBH_2^x+A$. She does not actually calculate the product explicitly since she does not know the value of $*$; however, it is not required to calculate the first component of the product.
    \item Bob similarly calculates his key as $K_B=H_1^yAH_2^y+B$.
\end{enumerate}

Since $A=\sum_{i=0}^{x-1}H_1^iMH_2^i$, $B=\sum_{i=0}^{y-1}H_1^iMH_2^i$, we have
    \begin{align*}
        H_1^xBH_2^x+A &= H_1^x\left(\sum_{i=0}^{y-1}H_1^iMH_2^i\right)H_2^x+A \\
        &= \sum_{i=x}^{x+y-1}H_1^iMH_2^i+\sum_{i=0}^{x-1}H_1^iMH_2^i \\
        &= \sum_{i=y}^{x+y-1}H_1^iMH_2^i+\sum_{i=0}^{y-1}H_1^iMH_2^i \\
        &= H_1^yAH_2^y+B
    \end{align*}
Alice and Bob therefore both arrive at the same shared key $K=K_A=K_B$.

Attacking the protocol directly requires recovering $x,y$ from $A,B$. This leads to a natural analogue of the computational Diffie-Hellman assumption; namely, computational infeasibility of retrieving the shared secret $K$ given the data $(H_1,H_2,M,A,B)$\footnote{This is a weaker security notion than key indistinguishability, analogue of the decisional Diffie-Hellman assumption; the authors of \cite{rahman2020make} conduct some computational experiments suggesting the latter assumption may hold. This fact is not further referenced in this paper, since the attack does not require solving the analogue of the discrete log problem.}. Clearly, this is closely related to an analogue of the discrete logarithm problem (DLP), which is shown in \cite{rahman2020make} to be at least as hard as the standard DLP provided certain ``safe'' primes $p$ are used.

\section{Attack by Cayley-Hamilton}
Several protocols following the non-commutative shift syntax are vulnerable to the dimension attack, which does not require one to solve the problems addressed in the security assumption. This class of attacks, however, deal with schemes using only group multiplication. In our case, we have two operations; the following attack was developed by Brown, Koblitz and Legrow in \cite{brown2021cryptanalysis} and is roughly outlined below. Suppose the public data $M,H_1,H_2$ are fixed, as well as transmitted values $A,B$ corresponding to exponents $x,y$ respectively.

The attack relies on the following easily-verifiable fact: we have that
\[H_1AH_2 + M - A = H_1^xMH_2^x\]
This identity is known as the ``telescoping'' equality. It is crucial to allow the recovery of the quantity $H_1^xMH_2^x$ from the data available to an eavesdropper on the left-hand side of the equality.

Suppose the matrices are of size $n\in\mathbb{N}$. We also rely on the Cayley-Hamilton theorem, which for a square matrix $A$ over $M_n(\mathbb{Z}_p)$ and any $x\in\mathbb{N}$ guarantees the existence of coefficients $p_i$ in $\mathbb{Z}_p$ such that 
\[A^x = \sum_{i=0}^{n-1}p_i A^i\]

Finally, we need the following two-part lemma, the proof of which is given in \cite{brown2021cryptanalysis}.

\begin{lem}\label{lem:attack-lemmas}
Let $n\in\mathbb{N}$. Define $L:M_n(\mathbb{Z}_p)\to M_{n^2}(\mathbb{Z}_p)$ component-wise by
\[(L(Y))_{jn+i,hn+g}=(H_1^gYH_2^h)_{i,j}\]
for $0\leq i,j,g,h\leq n-1$, and $vec:M_n(\mathbb{Z}_p)\to\mathbb{Z}_p^{n^2}$ by
\[vec(A)_{jn+i} = A_{i,j}\]
for $0\leq i,j \leq n-1$. Then there is a vector $s$ in $\mathbb{Z}_p^{n^2}$ such that $L(Y)s=vec(H_1^xYH_2^x)$ for any $Y\in M_n(\mathbb{Z}_p)$. Moreover, for some $Y\in M_n(\mathbb{Z}_p)$, a vector $u\in\mathbb{Z}_p^{n^2}$ satisfying $L(Y)u=0$ also satisfies $L(H_1^lYH_2^l)u=0$ for any $l\in\mathbb{N}$.
\end{lem}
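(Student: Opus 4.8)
The plan is to treat the two parts separately, since the first is a direct consequence of the Cayley--Hamilton theorem while the second, perhaps surprisingly, does not need it at all.

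For the first part, I would begin by invoking Cayley--Hamilton for the fixed matrices $H_1$ and $H_2$: there exist coefficients $a_0,\dots,a_{n-1}$ and $b_0,\dots,b_{n-1}$ in $\mathbb{Z}_p$ with $H_1^x=\sum_{g=0}^{n-1}a_gH_1^g$ and $H_2^x=\sum_{h=0}^{n-1}b_hH_2^h$ (these representations need not be unique; any valid choice will do). Then I would simply define $s$ by $s_{hn+g}=a_gb_h$ for $0\le g,h\le n-1$ and verify the claimed identity directly: expanding $(L(Y)s)_{jn+i}=\sum_{g,h}(H_1^gYH_2^h)_{i,j}\,a_gb_h$ and pulling the scalars inside the product by bilinearity gives $\bigl((\sum_g a_gH_1^g)Y(\sum_h b_hH_2^h)\bigr)_{i,j}=(H_1^xYH_2^x)_{i,j}=vec(H_1^xYH_2^x)_{jn+i}$. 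The essential point is that $s$ depends only on $H_1,H_2,x$ and not on $Y$, which is exactly what the statement asserts.

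For the second part, the key observation is that the vector equation $L(Y)u=0$ is just a repackaging of the matrix equation $\sum_{g,h}u_{hn+g}H_1^gYH_2^h=0$ in $M_n(\mathbb{Z}_p)$: reading off the $(i,j)$ entry of this matrix recovers the $(jn+i)$-th coordinate of $L(Y)u$. Given this dictionary, I would multiply the matrix equation on the left by $H_1^l$ and on the right by $H_2^l$, obtaining $\sum_{g,h}u_{hn+g}H_1^{l+g}YH_2^{h+l}=0$. Since powers of a single matrix commute, $H_1^{l+g}YH_2^{h+l}=H_1^g(H_1^lYH_2^l)H_2^h$, so the left-hand side is precisely the matrix whose entries assemble into $L(H_1^lYH_2^l)u$; hence $L(H_1^lYH_2^l)u=0$. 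This argument works for every fixed $Y$, so the "for some $Y$" in the statement can in fact be read as "for any $Y$".

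I do not anticipate a genuine obstacle. The only thing requiring care is the index bookkeeping in the definitions of $L$ and $vec$---keeping the flattening conventions (rows indexed by $jn+i$, columns by $hn+g$) consistent, and attaching the scalars $a_gb_h$ to the column index $hn+g$ in the correct order. Once the correspondence between $L(Y)u=0$ and the matrix equation $\sum_{g,h}u_{hn+g}H_1^gYH_2^h=0$ is set up cleanly, both parts reduce to short computations.
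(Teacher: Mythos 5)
Your proposal is correct and follows essentially the same route the paper takes: the paper defers the proof of Lemma~\ref{lem:attack-lemmas} to the cited reference, but its appendix proofs of the generalised versions (Propositions~\ref{prop:s-exists} and~\ref{prop:zero}) use exactly your construction $s_{hn+g}=a_gb_h$ from Cayley--Hamilton for the first part and the same ``multiply the matrix identity $\sum_{g,h}u_{hn+g}H_1^gYH_2^h=0$ on the left by $H_1^l$ and on the right by $H_2^l$'' argument for the second. Your observation that the second part holds for every $Y$, not just ``some $Y$'', is also consistent with how the paper states and uses Proposition~\ref{prop:zero}.
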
 

The attack now works as follows:
\begin{enumerate}
    \item Using the telescoping equality, recover the value $H_1^xMH_2^x$.
    \item Solve the $n^2$ linear equations in $n^2$ unknowns defined by $L(M)t=vec(H_1^xMH_2^x)$ to recover a vector $t$; by Lemma~\ref{lem:attack-lemmas}, there is at least one solution to this system of equations; and any solution satisfies $L(B)t=vec(H_1^xBH_2^x)$.
    \item Since $vec$ is a bijection, applying its inverse to $L(B)t$ allows one to recover $H_1^xBH_2^x$, and therefore the shared key $K$ by simply adding $A$ to this quantity.
\end{enumerate}

\section{Attacking Non-Commutative Rings}

A key part of the above attack is the construction of the vector $s$, which is done by the Cayley-Hamilton theorem. In particular, this theorem only applies to square matrices over commutative rings; we will use the following theorem to characterise some non-commutative rings over which the scheme is still insecure. 
In the following, let $R$ be an arbitrary non-commutative ring.

\begin{thm}\label{thm:inj-ring-hom}
Suppose there is an injective ring homomorphism $\phi:R\to M_m(S)$ for some $m\in\mathbb{N}$ and a commutative ring $S$. For any $n\in\mathbb{N}$ define 
\begin{align*}
    \psi: &M_n(R)\to M_{mn}(S) \\
    &(\psi(A))_{im+g,jm+h}=(\phi(A_{i,j}))_{g,h}
\end{align*}
where $0\leq i,j\leq n-1$, $0\leq g,h \leq m-1$. Then $\psi$ is an injective ring homomorphism.
\end{thm}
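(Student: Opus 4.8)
The plan is to realise $\psi$ as a composition of two maps, each of which is transparently an injective ring homomorphism, so that the only genuine computation is an index-reindexing identity. Concretely, $\psi$ factors as
\[
M_n(R)\ \xrightarrow{\ \phi_\ast\ }\ M_n(M_m(S))\ \xrightarrow{\ \beta\ }\ M_{mn}(S),
\]
where $\phi_\ast$ applies $\phi$ to each entry and $\beta$ is the canonical ``read the block matrix as an ordinary matrix'' identification. Note there is no tension between $R$ being non-commutative and $S$ being commutative: $\phi$ embeds $R$ into $M_m(S)$, which is itself a non-commutative ring, so the non-commutativity of $R$ is simply accommodated by that of $M_m(S)$, and we never need to commute two elements of $R$.

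First I would treat $\phi_\ast:M_n(R)\to M_n(M_m(S))$, $(\phi_\ast(A))_{i,j}=\phi(A_{i,j})$. Additivity is immediate entrywise. For products, $(\phi_\ast(AB))_{i,j}=\phi\!\left(\sum_k A_{i,k}B_{k,j}\right)=\sum_k \phi(A_{i,k})\phi(B_{k,j})=(\phi_\ast(A)\phi_\ast(B))_{i,j}$, using additivity and multiplicativity of $\phi$; the order of the factors $\phi(A_{i,k})$ and $\phi(B_{k,j})$ is preserved, so non-commutativity of $R$ poses no difficulty. Since $\phi(1_R)=I_m$, the identity of $M_n(R)$ maps to the identity of $M_n(M_m(S))$, and injectivity is clear: $\phi_\ast(A)=0$ forces $\phi(A_{i,j})=0$ for all $i,j$, hence $A_{i,j}=0$ for all $i,j$ because $\phi$ is injective.

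Next I would invoke the standard ring isomorphism $\beta:M_n(M_m(S))\xrightarrow{\ \sim\ }M_{mn}(S)$ given by $(\beta(C))_{im+g,\,jm+h}=(C_{i,j})_{g,h}$ for $0\le i,j\le n-1$, $0\le g,h\le m-1$; its inverse partitions an $mn\times mn$ matrix into $m\times m$ blocks, so it is a bijection, and it is additive entrywise. The one point that takes any care is multiplicativity, i.e.\ that block multiplication agrees with ordinary matrix multiplication after flattening. Writing $l=km+t$ with $0\le k\le n-1$, $0\le t\le m-1$, this reduces to
\[
(\beta(C)\beta(D))_{im+g,\,jm+h}=\sum_{l=0}^{mn-1}(\beta(C))_{im+g,\,l}(\beta(D))_{l,\,jm+h}=\sum_{k=0}^{n-1}\sum_{t=0}^{m-1}(C_{i,k})_{g,t}(D_{k,j})_{t,h},
\]
and the inner sum over $t$ is exactly $(C_{i,k}D_{k,j})_{g,h}$, so the whole expression equals $\left(\sum_k C_{i,k}D_{k,j}\right)_{g,h}=(\beta(CD))_{im+g,\,jm+h}$. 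Finally $\beta$ sends the block-diagonal identity to $I_{mn}$.

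To conclude, comparing definitions shows $(\beta\circ\phi_\ast)(A)$ has $(im+g,\,jm+h)$ entry $(\phi(A_{i,j}))_{g,h}$, so $\psi=\beta\circ\phi_\ast$; a composition of injective unital ring homomorphisms is again one, giving the theorem. I do not expect a real obstacle here: the only part with content is the bookkeeping in the multiplicativity check for $\beta$, and the proof is in essence just the remark that $M_n(M_m(S))\cong M_{mn}(S)$ naturally, precomposed with the entrywise action of $\phi$.
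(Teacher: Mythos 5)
Your proof is correct. It is, at bottom, the same entrywise index computation as the paper's, but you package it differently: the paper verifies directly that $\psi$ preserves sums, products, and the identity by expanding $(\psi(AB))_{im+g,jm+h}$ into the double sum $\sum_{k}\sum_{l}\psi(A)_{im+g,km+l}\psi(B)_{km+l,jm+h}$, whereas you factor $\psi=\beta\circ\phi_\ast$ through $M_n(M_m(S))$ and split that single computation into two halves --- the first three lines of the paper's chain of equalities become the statement that $\phi_\ast$ is a homomorphism, and the remaining lines become the multiplicativity of the block-flattening isomorphism $\beta$. What your organisation buys is modularity: the only nontrivial bookkeeping is isolated in the standard and citable fact that $M_n(M_m(S))\cong M_{mn}(S)$, and injectivity becomes immediate for each factor separately (your kernel argument for $\phi_\ast$ is also marginally cleaner than the paper's, which compares $\psi(A)$ with $\psi(B)$ entrywise). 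One small point worth making explicit if you write this up: for the kernel criterion ``$\phi_\ast(A)=0\Rightarrow A=0$'' to give injectivity you are using additivity, which you have already established, so there is no gap --- but the paper's phrasing via $\psi(A)=\psi(B)$ avoids even that remark. Your observation that the non-commutativity of $R$ is absorbed by that of $M_m(S)$, so that no two elements of $R$ ever need to be commuted, is exactly the point of the theorem and is implicit in the paper's computation as well.
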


\begin{proof}
To check multiplication is preserved we just check that the relevant quantities agree on each entry. Let $A, B$ in $M_n(R)$; then
\begin{align*}
    (\psi(AB))_{im+g,jm+h} &= (\phi((AB)_{i,j}))_{g,h} \\
    &= \left(\phi\left(\sum_{k=0}^{n-1}A_{i,k}B_{k,j}\right)\right)_{g,h} \\
    &= \left(\sum_{k=0}^{n-1}\phi(A_{i,k})\phi(B_{k,j})\right)_{g,h} \\
    &= \sum_{k=0}^{n-1}(\phi(A_{i,k})\phi(B_{k,j}))_{g,h} \\
    &= \sum_{k=0}^{n-1}\sum_{l=0}^{m-1}\phi(A_{i,k})_{g,l}\phi(B_{k,j})_{l,h} \\
    &= \sum_{k=0}^{n-1}\sum_{l=0}^{m-1}\psi(A)_{im+g,km+l}\psi(B)_{km+l,jm+h} \\
    &= (\psi(A)\psi(B))_{im+g,jm+h}
\end{align*}

Similarly, for addition, we have 
\begin{align*}
    (\psi(A+B))_{in+g,jn+h} &= (\phi((A+B)_{i,j}))_{g,h} \\
    &= (\phi(A_{i,j})+\phi(B_{i,j}))_{g,h} \\
    &= (\phi(A_{i,j}))_{g,h}+(\phi(A_{i,j}))_{g,h}
\end{align*}
Finally, $\psi(I_n)=I_{mn}$ since $\phi(1)=I_m$, so $\psi$ is a ring homomorphism. To see injectivity, for $A,B\in M_n(R)$ suppose $\psi(A)=\psi(B)$. Then for each $0\leq i,j\leq n-1$, $0\leq g,h \leq m-1$ we have $\phi(A_{i,j})_{g,h}=\phi(B_{i,j})_{g,h}$. Therefore $\phi(A_{i,j})=\phi(B_{i,j})$ for each $i,j$. Since $\phi$ is injective, we must have $A=B$.
\end{proof}

Once we have established that $\psi$ is indeed a ring homomorphism the attack can just be carried out on $\psi$ applied to the public matrices. The details are listed below for completeness.

\subsection{Extending the Attack}

Letting $k=mn$ we have a function $L\circ\psi:M_n(R)\to M_{k^2}(S)$ defined by 
\[(L(\psi(Y)))_{jk+i,hk+g}=(\psi(H_1^gYH_2^h))_{i,j}\]
where each of the indices run from $0$ to $k-1$. The function $vec$ (defined with a different domain in Lemma~\ref{lem:attack-lemmas}) stacks the columns of a matrix in $M_k(S)$ to give a column vector of height $k^2$. 

We will need to invoke the following two propositions during the attack:

\begin{prop}\label{prop:s-exists}
There is a vector $s\in S^{k^2}$ such that for all $Y\in M_n(R)$, we have 
\[L(\psi(Y))s=vec(\psi(H_1^xYH_2^x))\]
\end{prop}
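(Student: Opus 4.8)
The plan is to reduce Proposition~\ref{prop:s-exists} to the commutative case already handled by Lemma~\ref{lem:attack-lemmas}, using the fact that $\psi$ from Theorem~\ref{thm:inj-ring-hom} is an injective ring homomorphism. First I would observe that since $\psi$ is a ring homomorphism, powers and products are preserved: for any $Y \in M_n(R)$ we have $\psi(H_1^g Y H_2^h) = \psi(H_1)^g \psi(Y) \psi(H_2)^h$. Writing $\bar H_1 = \psi(H_1)$, $\bar H_2 = \psi(H_2)$, which are genuine $k \times k$ matrices over the \emph{commutative} ring $S$, the map $L \circ \psi$ therefore has exactly the same shape as the operator $L$ of Lemma~\ref{lem:attack-lemmas}, but instantiated with $\bar H_1, \bar H_2$ in place of $H_1, H_2$ and with matrix size $k$ in place of $n$.

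Next I would apply the first part of Lemma~\ref{lem:attack-lemmas} directly to these matrices over $S$. That lemma, as stated, is for matrices over $\mathbb{Z}_p$; the only property of $\mathbb{Z}_p$ it uses is commutativity (via Cayley--Hamilton, which holds over any commutative ring), so the same argument gives a vector $s \in S^{k^2}$ with $L(\psi(Y)') s = vec\bigl(\bar H_1^{\,x} \psi(Y)' \bar H_2^{\,x}\bigr)$ for every $\psi(Y)' \in M_k(S)$ — in particular for every matrix of the form $\psi(Y)$ with $Y \in M_n(R)$. Pushing $\psi$ back through the power, $\bar H_1^{\,x} \psi(Y) \bar H_2^{\,x} = \psi(H_1^x Y H_2^x)$, so this $s$ satisfies exactly $L(\psi(Y)) s = vec(\psi(H_1^x Y H_2^x))$, as required.

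The one point needing care — and what I expect to be the main obstacle — is the precise statement of Lemma~\ref{lem:attack-lemmas}: as written it asserts the existence of $s$ for matrices over $\mathbb{Z}_p$ specifically, so I would either restate and lightly reprove it over an arbitrary commutative ring $S$ (the proof in \cite{brown2021cryptanalysis} transfers verbatim, since it rests only on the telescoping identity and Cayley--Hamilton), or explicitly remark that the cited proof never uses anything beyond commutativity of the scalars. I would also double-check the index bookkeeping: the composite $L \circ \psi$ is built from $k \times k$ matrices over $S$, the indices $i,j,g,h$ now range over $0 \le i,j,g,h \le k-1$, and the definition $(L(\psi(Y)))_{jk+i,hk+g} = (\psi(H_1^g Y H_2^h))_{i,j}$ matches the $L$ of Lemma~\ref{lem:attack-lemmas} with $n$ replaced by $k$ — so no new combinatorics is introduced. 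Modulo this cosmetic relabelling, the proposition is an immediate corollary of the homomorphism property and the commutative-ring version of the attack lemma.
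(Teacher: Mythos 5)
Your proposal is correct and matches the paper's argument in substance: the paper's appendix proof simply inlines the Cayley--Hamilton computation over the commutative ring $S$ applied to $\psi(H_1)$ and $\psi(H_2)$ (constructing $s = vec(T)$ with $T_{i,j}=p_iq_j$ from the characteristic-polynomial coefficients), which is exactly the content of the first part of Lemma~\ref{lem:attack-lemmas} that you invoke after noting it only needs commutativity of the scalars. The reduction via $\psi(H_1^gYH_2^h)=\psi(H_1)^g\psi(Y)\psi(H_2)^h$ and the caveat about restating the lemma over $S$ are both exactly right, so this is the same proof presented as a corollary rather than a recomputation.
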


\begin{prop}\label{prop:zero}
Suppose some vector $u$ is such that $L(\psi(Y))u=0$ for $Y\in M_n(R)$. Then for all $l\in\mathbb{N}$ we have $L(\psi(H_1^lYH_2^l))u=0$.
\end{prop}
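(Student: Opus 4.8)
The plan is to deduce both propositions from Lemma~\ref{lem:attack-lemmas} by transporting the whole setup through the injective ring homomorphism $\psi$. The key observation is that $\psi$ sends $M_n(R)$ into $M_{k}(S)$ with $k=mn$, and that $S$ is commutative, so Lemma~\ref{lem:attack-lemmas} applies verbatim to matrices in $M_k(S)$. Moreover, because $\psi$ is a ring homomorphism, it respects the operations that appear in the attack: $\psi(H_1^gYH_2^h)=\psi(H_1)^g\psi(Y)\psi(H_2)^h$, and in particular $\psi(H_1^xYH_2^x)=\psi(H_1)^x\psi(Y)\psi(H_2)^x$. So if we set $\bar H_1=\psi(H_1)$, $\bar H_2=\psi(H_2)$, then the map $L\circ\psi$ defined above is literally the map $L$ from Lemma~\ref{lem:attack-lemmas}, built from $\bar H_1,\bar H_2\in M_k(S)$, evaluated at $\psi(Y)$.

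For Proposition~\ref{prop:s-exists}, I would first record that $\psi(Y)$ ranges over a subset of $M_k(S)$ as $Y$ ranges over $M_n(R)$; since the first part of Lemma~\ref{lem:attack-lemmas} produces a single vector $s\in S^{k^2}$ that works \emph{simultaneously for all} $Y'\in M_k(S)$, the same $s$ works in particular for every matrix of the form $\psi(Y)$. Then one chases the identity
\[
L(\psi(Y))s = vec(\bar H_1^{\,x}\,\psi(Y)\,\bar H_2^{\,x}) = vec(\psi(H_1^xYH_2^x)),
\]
where the last equality is exactly the homomorphism property of $\psi$ together with $x$ being the (fixed) exponent from the transmitted value. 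For Proposition~\ref{prop:zero}, I would similarly apply the second part of Lemma~\ref{lem:attack-lemmas} to the matrix $\psi(Y)\in M_k(S)$: a vector $u$ with $L(\psi(Y))u=0$ satisfies $L(\bar H_1^{\,l}\,\psi(Y)\,\bar H_2^{\,l})u=0$ for all $l\in\mathbb N$, and then rewrite $\bar H_1^{\,l}\,\psi(Y)\,\bar H_2^{\,l}=\psi(H_1^lYH_2^l)$ to conclude $L(\psi(H_1^lYH_2^l))u=0$.

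The only genuinely delicate point is bookkeeping: one must check that the composite indexing used to define $L\circ\psi$ above agrees entry-for-entry with first forming $\psi(Y)\in M_k(S)$ and then applying the $L$ of Lemma~\ref{lem:attack-lemmas} with $n$ replaced by $k$. Concretely, $L$ in the lemma uses indices of the form $jn+i$ and $hn+g$ with the block size equal to its matrix dimension; here the dimension is $k$, so the indices are $jk+i$ and $hk+g$, which is precisely what is written in the definition of $L(\psi(Y))$ — so the match is immediate once one notes that the Cayley--Hamilton input to the lemma is the commutative-ring matrix $\psi(H_1)$ (resp.\ $\psi(H_2)$), not $H_1$ itself. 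I do not expect any real obstacle beyond confirming this identification; once it is in place, both propositions are corollaries of Lemma~\ref{lem:attack-lemmas} applied over $S$.
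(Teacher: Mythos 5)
Your proposal is correct and takes essentially the same route as the paper: the appendix proof of Proposition~\ref{prop:zero} is precisely the second part of Lemma~\ref{lem:attack-lemmas} transported through $\psi$, using $\psi(H_1^gH_1^lYH_2^lH_2^h)=\psi(H_1)^l\psi(H_1^gYH_2^h)\psi(H_2)^l$ to factor $\psi(H_1)^l$ and $\psi(H_2)^l$ out of the defining sum and then invoking $L(\psi(Y))u=0$. The only cosmetic difference is that the paper rewrites that computation explicitly over $S$ rather than citing the lemma as a black box (the lemma being stated only for $\mathbb{Z}_p$), but since this part of the argument uses nothing beyond commutativity of the scalar ring and the homomorphism property of $\psi$, your identification of $L\circ\psi$ with the lemma's $L$ built from $\psi(H_1),\psi(H_2)\in M_k(S)$ is sound.
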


The proofs are somewhat tedious and similar to those given in \cite{brown2021cryptanalysis}; the interested reader can find them in the appendix.

For the public parameters $H_1,H_2,M$ and fixed values of $A,B$  we can calculate
\[\psi(M+H_1AH_2-A)=\psi(H_1^xMH_2^x)\]

By Proposition~\ref{prop:s-exists}, the equation 
\[L(\psi(M))t=vec(\psi(H_1^xMH_2^x))\]
has at least one solution. We can therefore solve this system of linear equations efficiently, for example by Gaussian elimination, and obtain a solution, say $t$. We know that, with $Y=B$, we also have \[L(\psi(B))s=vec(\psi(H_1^xBH_2^x))\]

Since the vectors $t$ and $s$ satisfy $L(\psi(M))t=L(\psi(M))s$ and $L$ preserves addition, setting $u=t-s$ we have, invoking Proposition~\ref{prop:zero}, that 

\begin{align*}
    0 &= L(\psi(M))u+L(\psi(H_1MH_2))u+...+(L(\psi(H_1^{y-1}MH_2^{y-1}))u \\
    &= L(\psi(M)+\psi(H_1MH_2)+...+\psi(H_1^{y-1}MH_2^{y-1}))u \\
    &= L(\psi(M+H_1MH_2+...+H_1^{y-1}MH_2^{y-1}))u \\
    &= L(\psi(B))u
\end{align*}

Therefore $L(\psi(B))t=L(\psi(B))s=vec(\psi(H_1^xBH_2^x))$, so from public information we can recover $\psi(H_1^xBH_2^x)$, and hence 
\begin{align*}
    \psi(K)&= \psi(A+H_1^xBH_2^x) \\
    &= \psi(A)+\psi(H_1^xBH_2^x)
\end{align*}

Note that the vector $s$ is not available from public information, but at no point is its calculation required. It is merely described to show that the vector $t$ recovered by the attacker will indeed suffice for recovery of $\psi(K)$.

In general, recovering $K$ from $\psi(K)$ can be done by inverting $\phi$ on the $n^2$ blocks of size $m\times m$ of $\psi(K)$. This is trivial if there is an explicit description of $\phi$.

\section{Group Ring Representations}

A well-behaved and easily scalable example of non-commutative rings are group rings of the form $R[G]$, where $R$ is a commutative ring and $G$ is a non-abelian group. For example, $\mathbb{Z}_7[A_5]$ is used in \cite{habeeb2013public}. We now show that such a ring meets the conditions required for the above modification of the attack. The following definitions are taken from \cite{milies2002introduction}, to which the reader is referred for more detail.

Let $G$ be a finite group, $R$ be a ring. Consider the set of formal sums

\[R[G]=\left\{\sum_{g\in G}a_g.g:a_g\in R,g\in G\right\}\]
where the multiplication refers to scalar multiplication\footnote{Technically speaking, the formal sums refer to linear combinations of functions from $G$ to $R$. However, once we have defined such functions we usually dispense with them in favour of the notation above; see \cite{milies2002introduction} for further details.}. Together with addition and multiplication defined respectively by
\[\sum_{g\in G}a_g. g+\sum_{g\in G}b_g. g=\sum_{g\in G}(a_g+b_g). g \quad \left(\sum_{g\in G}a_g.g\right)\left(\sum_{h\in G}b_h. h\right)=\sum_{g,h\in G}(a_gb_h). gh\]
$R[G]$ is a ring that is at the same time a free left $R$-module with basis $G$. Moreover, $G$ acts on $R[G]$ by left multiplication:

\[g\cdot\sum_{h\in G}a_h. h=g\sum_{h\in G}a_h. h=\sum_{h\in G}a_h. (gh)\]

Suppose $|G|=m$. Note that left multiplication by a group element permutes the group, which is the basis of $R[G]$, the $R$-module of rank $m$. As a function, then, this multiplication is an automorphism of the $R$-module; there is therefore a function $T:G\to GL(k,R)$, where the function $T(g)$ has matrix representation with entries in $R$. This is the so-called ``left-regular representation'' of $G$ over $R$. Moreover, one can easily verify that this map is a group homomorphism.

The matrix representation of the function $T(g)\in GL(k,R)$ is not unique and depends on a choice of basis. However, since the group $G$ is a basis of $R[G]$, and $T(g)$ permutes this basis, we can specify the matrices as follows. Enumerate the elements of $G$ arbitrarily, and write $T_{g_i}$ for the matrix corresponding to the function $T(g_i)$. Suppose $g_ig_j=g_k$, then $(T_{g_i})_{k,j}=1$, with all other entries in the row 0. In this way we can construct a set of matrices $\{T_g:g\in G\}$ from a multiplication table of $G$.

\subsection{Mapping to Matrices over a Commutative Ring}

We can extend the left-regular representation outlined above to a map

\[\phi:R[G]\to M_m(R):\sum_{g\in G}a_g.g\mapsto \sum_{g\in G}a_g. T_g\]

Note that the sum of scaled invertible matrices is not necessarily invertible; hence, the map is into $M_m(R)$, rather than $GL(m,R)$.

\begin{prop}\label{prop:phi-exists}
Suppose $R$ is a commutative ring. We have that $\phi:R[G]\to M_m(R)$ is an injective ring homomorphism.
\end{prop}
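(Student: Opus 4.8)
The plan is to verify directly that $\phi$ preserves addition, multiplication, and the identity, and is injective, using the fact that $\{T_g : g \in G\}$ is the set of matrices of the left-regular representation. The map $\phi$ is essentially the $R$-linear extension of the group homomorphism $T : G \to GL(m,R)$ to the group ring, so the content is mostly checking that this extension is well-behaved.

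First I would note that $\phi$ is an $R$-module homomorphism essentially by construction: both $R[G]$ and $M_m(R)$ are free $R$-modules, and $\phi$ is defined by specifying where the basis elements $g$ go (namely to $T_g$) and extending linearly, so additivity and $R$-scalar compatibility are immediate from the displayed definitions of addition in $R[G]$ and $M_m(R)$. Next I would check that the identity is preserved: the identity of $R[G]$ is $1_R \cdot 1_G$, and $T_{1_G}$ is the identity matrix (since $1_G g_j = g_j$ forces $(T_{1_G})_{j,j} = 1$ with all other entries zero), so $\phi(1_{R[G]}) = I_m$.

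The main work is multiplicativity. Here I would take two elements $a = \sum_g a_g . g$ and $b = \sum_h b_h . h$ of $R[G]$, compute $\phi(ab) = \phi\!\left(\sum_{g,h} (a_g b_h) . gh\right) = \sum_{g,h} a_g b_h T_{gh}$, using that $R$ is commutative so the scalars behave. On the other side, $\phi(a)\phi(b) = \left(\sum_g a_g T_g\right)\!\left(\sum_h b_h T_h\right) = \sum_{g,h} a_g b_h T_g T_h$, again pulling scalars out of the matrix product since $R$ is commutative. Since $T$ is a group homomorphism we have $T_g T_h = T_{gh}$ for all $g, h$, so the two expressions match. Commutativity of $R$ is the hypothesis that makes the scalar manipulations legitimate and is exactly where this argument would fail over a general ring; this step, together with citing (or briefly re-deriving) the homomorphism property $T_g T_h = T_{gh}$ from the multiplication-table description of the $T_g$, is the crux of the proof.

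Finally, for injectivity I would observe that $\ker \phi$ is trivial: if $\sum_g a_g T_g = 0$, then reading off a suitable entry of the matrix isolates each coefficient. Concretely, the matrices $\{T_g\}$ have disjoint supports as functions on index pairs — each $T_g$ has exactly one nonzero entry in each column, and for a fixed column $j$ the row index $k$ with $g g_j = g_k$ is distinct for distinct $g$ — so the entry $(k,j)$ of $\sum_g a_g T_g$ equals $a_g$ where $g$ is the unique group element with $g g_j = g_k$. Hence all $a_g = 0$. Equivalently, one can simply invoke that the left-regular representation is faithful. I do not expect any genuine obstacle here; the only thing to be careful about is keeping track of the indexing convention $(T_{g_i})_{k,j} = 1$ when $g_i g_j = g_k$ so that the homomorphism property $T_{g}T_{h} = T_{gh}$ is checked consistently with it.
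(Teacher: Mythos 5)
Your proposal is correct and follows essentially the same route as the paper: linear extension for additivity, commutativity of $R$ plus $T_gT_h=T_{gh}$ for multiplicativity, and linear independence of the $\{T_g\}$ (via their disjoint supports as permutation matrices) for injectivity. Your injectivity argument is just a slightly more explicit version of the paper's observation that no two $T_g$ share a nonzero entry position because the left-regular action is faithful.
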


\begin{proof}
Clearly $\phi$ is an additive homomorphism. To show multiplication is preserved note that since $R$ is commutative we have 

\[\sum_{g,h\in G}(a_g b_h).T_{gh}=\sum_{g,h\in G}(a_g b_h).T_g T_h=\sum_{g\in G}a_g.T_g \sum_{h\in G}b_h.T_h\]

Preservation of the identity is inherited from the homomorphicity of the map $T$. To see that $\phi$ is injective, we first show that $\phi$ is injective exactly when the matrices $\{T_g:g\in G\}$ are linearly independent over $R$. This is because $\ker\phi=\{0\}$ exactly when the only coefficients $a_g$ that give $\sum_{g\in G}a_g.T_g=0$ are all zero, i.e. when the matrices are linearly independent, and the kernel is trivial if and only if the map is injective.
Suppose for contradiction that matrices $T_{g_i},T_{g_j}$ have a 1 in the same place, say the $m,n$th entry. By the construction of such matrices given above, this means that for $g_i\neq g_j$ we have $g_ig_m=g_n=g_jg_m$, which is a contradiction, since the action of a group on itself by left multiplication is faithful. Clearly, this implies the matrices are linearly independent, and so $\phi$ is injective.
\end{proof}

We therefore have the required homomorphism $\phi$, from which $\psi$ can be constructed as in the general case. 

\subsection{Inverting $\psi$}
We can recover the unique value of $K$ as follows. The $mn\times mn$ matrix $\psi(K)$ recovered in the above consists of $n^2$ blocks of size $m\times m$, where the $i,j$th block is given by $\phi(K_{i,j})$. We know from the proof of Proposition~\ref{prop:phi-exists} that the matrices $T_g$ are a basis of the image of $\phi$, so $\phi(K_{i,j})$ has unique decomposition as $\phi(K_{i,j})=\sum_{g\in G}k_{g,i,j}.T_g$. Given the values of $T_g$, finding this decomposition amounts to solving $m$ linear equations in $m$ unknowns. By definition of $\phi$ we have $K_{i,j}=\sum_{g\in G}k_{g,i,j}.g$; repeating this procedure for each $i,j$, we recover $K$ from $\psi(K)$ in polynomial time.

\section{Conclusions}
We again stress that the attack described in this paper effectively bypasses the security assumption made in \cite{rahman2020make}. As remarked in \cite{brown2021cryptanalysis} this is another example of some inherent linearity underpinning matrix-based key exchange protocols.

The main limiting factor in the efficiency of this attack is recovering the vector $t$ by solving $(mn)^2$ linear equations in $(mn)^2$ unknowns. Since solving $n$ linear equations in $n$ unknowns has a complexity\footnote{Asymptotically faster methods are available but are outside the scope of this paper, and may only represent an overall increase in efficiency in the case that very large matrices are used.} of $\mathcal{O}(n^3)$, we expect the time complexity of the attack to be $\mathcal{O}((mn)^6)$. Should one wish to use a ring $R$ satisfying the conditions of Theorem~\ref{thm:inj-ring-hom}, therefore, one should ensure that $m$ is large, where $\phi:R\to M_m(S)$, and $S$ is a commutative ring. For sufficiently large values of $m$ the attack becomes infeasible, although the complexity is still polynomial.

In the case of group rings $R[G]$ this is possible to achieve by increasing the size of the group $G$. However, we constructed $\phi$ from the left regular representation of $G$ over $R$, where the dimension of the representation and therefore $m$ is always the size of $G$. For some groups it might be possible to construct $\phi$ from a faithful representation of lower dimension, so one should use a group where there is a lower bound on the dimension of a faithful representation; for example, certain $p$-groups \cite{janusz1971faithful}. This fact was used to counter similar attacks in \cite{kahrobaei2016using}.

It is an interesting problem to determine for which non-commutative rings there is no injective homomorphism into matrices over a commutative ring; such rings would be safe from the attack of \cite{brown2021cryptanalysis}, and the attack could not be extended by the methods described in this paper. In some sense, then, the criteria described in Theorem~\ref{thm:inj-ring-hom} serve to classify rings into ``safe'' or ``unsafe'' for use with the MAKE protocol. 

Finally, we note that although the group rings used in \cite{habeeb2013public} satisfy the conditions of Theorem~\ref{thm:inj-ring-hom}, our method does not present a break of the scheme in \cite{habeeb2013public}. This is effectively because the exchanged values $A,B$ are calculated as product, rather than a sum, and the function $L$ does not preserve multiplication. Moreover, whilst there is an analogue of the telescoping equality in that context, it does not necessarily allow recovery of the required quantity because the exchanged values do not always have a multiplicative inverse (in contrast to the values exchanged during the MAKE protocol, which always have additive inverse). On the other hand, Theorem~\ref{thm:inj-ring-hom} does give us access to the Cayley-Hamilton theorem in the context of \cite{habeeb2013public}.

\section{Acknowledgements}
We thank Chris Monico for his helpful correspondence on his paper and Vladimir Shpilrain for correspondence on the manuscript, as well as Alfred Dabson at City, University of London for advice on various technical details. We thank the anonymous referees for their helpful suggestions.

\newpage

\printbibliography

\newpage
 
\section{Appendix} 
Here we detail the proofs of Propositions \ref{prop:s-exists} and \ref{prop:zero}.

\begin{proof}[Proof of Proposition \ref{prop:s-exists}]
Since we are now working with matrices over a commutative ring, by the Cayley-Hamilton theorem we have $p_i,q_i\in S$ such that 
\[\psi(H_1)^x=\sum_{g=0}^{k-1}p_g\psi(H_1)^g \quad \psi(H_2)^x=\sum_{h=0}^{k-1}q_h\psi(H_2)^h\]

With $T\in M_{k}(S)$ defined by $T_{i,j}=p_iq_j$ and $s=vec(T)$ we have, for any $Y$ in $M_k(S)$, that 
\begin{align*}
    (L(\psi(Y))s)_{jk+i} &= \sum_{g,h=0}^{k-1}\left(\psi(H_1^gYH_2^h\right))_{i,j}p_gq_h \\
    &= \sum_{g,h=0}^{k-1}(p_g\psi(H_1)^x\psi(Y)q_h\psi(H_2)^h)_{i,j} \\
    &= (\psi(H_1)^x\psi(Y)\psi(H_2)^x)_{i,j} \\
    &= vec(\psi(H_1^xYH_2^x))_{jk+i}
\end{align*}

Therefore $L(\psi(Y))s=vec(\psi(H_1^xYH_2^x))$. \end{proof}

\begin{proof}[Proof of Proposition \ref{prop:zero}]
Checking component-wise, from the definitions it follows that 
\[L(\psi(H_1^lYH_2^l))u=vec\left(\sum_{g,h=0}^{k-1}(\psi(H_1)^g\psi(H_1^lYH_2^l)\psi(H_2)^h)u_{hn+g}\right)\]
and
\[\sum_{g,h=0}^{k-1}\psi(H_1^gYH_2^h)u_{hn+g}=vec^{-1}(L(\psi(Y))u)\]

Therefore, using that $\psi$ preserves multiplication, we have 
\begin{align*}
    L(\psi(H_1^lYH_2^l))u &= vec\left(\sum_{g,h=0}^{k-1}(\psi(H_1)^g\psi(H_1^lYH_2^l)\psi(H_2)^h)u_{hn+g}\right) \\
    &= vec\left(\psi(H_1)^l\left(\sum_{g,h=0}^{k-1}\psi(H_1^gYH_2^h)u_{hn+g}\right)\psi(H_2)^l\right) \\
    &= vec(\psi(H_1)^lvec^{-1}(L(\psi(Y))u)\psi(H_2)^l) \\
    &= vec(0)=0.
\end{align*}
since clearly $vec(0)$ is the zero vector height $k^2$, and $vec$ is a bijection. 
\end{proof}

\newpage

\end{document}